\documentclass[conference]{IEEEtran}
\IEEEoverridecommandlockouts

\usepackage{amsmath, bbm}
\usepackage{amssymb}
\usepackage{amsthm}
\usepackage{xcolor}

\theoremstyle{definition} 

\theoremstyle{theorem}

\newtheorem{theorem}{Theorem}
\newtheorem{lemma}{Lemma}
\newtheorem{corollary}{Corollary}
\usepackage{caption}
\usepackage{booktabs}

\usepackage{tabularx}

\usepackage{mathtools}

\newcommand{\myitem}[1]{\vspace{3pt}\noindent\textbf{#1}}

\usepackage{graphicx}
\usepackage[colorlinks=true, allcolors=blue]{hyperref}

\usepackage[noend]{algorithmic}
\usepackage[linesnumbered,ruled,vlined,algo2e]{algorithm2e} %
\usepackage{algorithm}
\let\oldnl\nl
\newcommand{\nonl}{\renewcommand{\nl}{\let\nl\oldnl}}

\usepackage{color}
\definecolor{red}{rgb}{1,0.2,0.2}
\definecolor{green}{rgb}{0.2,1,0.5}
\definecolor{blue}{rgb}{0,0,1}
\definecolor{lightblue}{rgb}{0.3,0.5,1}

\newcommand{\Prob}[1]{\mathrm{Pr}(#1)}

\newcommand{\ket}[1]{|#1\rangle}
\newcommand{\bra}[1]{\langle#1|}

\renewcommand{\baselinestretch}{0.99}

\title{On the Swapping Capacity of a Quantum Repeater
\thanks{Any mention of commercial products in this paper is for information only; it does not imply any recommendation or endorsement by NIST. U.S. Government work not protected by U.S. copyright.}
}

\author{
\IEEEauthorblockN{
Van Sy Mai\textsuperscript{1},
Richard J. La\textsuperscript{2},
Abdella Battou\textsuperscript{1}, 
Abderrahim Amlou\textsuperscript{1,3},
Cory Nunn\textsuperscript{4}
}

\IEEEauthorblockA{
\textsuperscript{1}CTL, NIST, Gaithersburg, MD, USA,\quad
\textsuperscript{2}ECE, Univ of Maryland, College Park, MD, USA\\
\textsuperscript{3}MSTII, Univ of Grenoble Alpes, Grenoble, France,\quad \textsuperscript{4}PML, NIST, Gaithersburg, MD, USA,
}

\IEEEauthorblockA{
vansy.mai@nist.gov,
hyongla@umd.edu,
abdella.battou@nist.gov, 
abderrahim.amlou@nist.gov,
cory.nunn@nist.gov
}
}

\begin{document}
\maketitle

\begin{abstract}
We study the capacity of a memory-based quantum repeater in entanglement swapping between two quantum links with either single or multiple memories, which we refer to as the end-to-end (E2E) entanglement throughput, subject to a constraint on the minimum fidelity. In order to approximate the E2E entanglement throughput, we adopt queueing models, where quantum links can have different characteristics: memory capacities, entanglement attempt rates and success probabilities, as well as classical communication latencies. We develop a model for estimating  E2E entanglement fidelity, while taking into account the heterogeneous dephasing and depolarizing dynamics of quantum memories and Bell-state measurements in entanglement swapping as well as classical communication delays and noises. Finally, with the help of our models for approximating the E2E entanglement throughput and fidelity, we use the maximum waiting times of entanglements in quantum memories at the repeater as optimization variables to maximize the E2E entanglement throughput while ensuring required minimum E2E fidelity. 
\end{abstract}

\section{Introduction}
Quantum networks are envisioned to enable non-classical applications, such as blind quantum computing, quantum sensors, quantum key distribution, distributed quantum computation, and ultimately, the quantum internet \cite{kimble2008quantum, wehner2018quantum}. A core requirement for these applications is end-to-end (E2E) entanglement distribution, which relies on quantum memory networks to establish high-quality, long-distance quantum connections. However, establishing and maintaining entanglements over long distances is challenging because they are highly susceptible to decay and decoherence from the environment and interfaces. Furthermore, in quantum memory networks, the quality of memory suitable for communication is currently very limited in both efficiency and coherence time, which restricts both the range and the quality of the achievable E2E entanglement. 

In many cases, applications that consume the provided E2E entanglements have a requirement on their fidelity. We consider a simple quantum network consisting of two quantum links and a quantum repeater between them which performs entanglement swapping. Our goal is to devise a framework that can be used to maximize the rate at which E2E entanglements can be provided to the end nodes subject to a constraint on their fidelity.

\noindent \underline{\bf Motivation for Our Study:} 
For a general path, it has been shown that the E2E entanglement rate depends on the swapping order and we can find an optimal swapping order efficiently, which maximizes the throughput~\cite{mai2025towards, chang2022order, ghaderibaneh2022efficient,shchukin2019waiting}. For a heterogeneous path with probabilistic swapping, an optimal policy can be a fixed order that can be described by a binary tree (i.e., each node has exactly two children). Thus, the network with two quantum links considered here serves as a building block for analyzing the rate of the tree; this result is omitted here due to a space constraint and will be reported elsewhere. Moreover, it also finds application in analyzing the performance of a quantum switch in a general network. 

\noindent \underline{\bf Summary of Our Contributions:}
First, we develop a model for estimating the E2E entanglement fidelity, which takes into account heterogeneous dephasing and depolarizing dynamics of quantum memories, Bell-state measurements (BSMs) in entanglement swapping, and classical communication delays and noises. We show that a constraint on minimum E2E entanglement fidelity is equivalent to a constraint on the maximum holding times (MHTs) of entangled qubits at the repeater before they are discarded. 

Second, based on this observation, we propose analytical models to approximate the E2E entanglement throughput subject to a constraint on MHTs of the stored entangled qubits at the repeater.
They provide formulae that can be used to approximate the E2E entanglement throughput as a function of the MHTs at the repeater. 

Third, using the above models,
we formulate the problem of maximizing the E2E entanglement throughput subject to a constraint on E2E entanglement fidelity as an optimization problem with the MHTs as the optimization variables. 
We show that the optimal solution of an equivalent problem is a corner point of the feasible set, in the process revealing the insights behind the solution.

Finally, we present simulation results to validate the accuracy of our models and insights; our results show that the proposed models can accurately predict the E2E entanglement throughput under various settings. Moreover, our proposed approach to maximizing the throughput subject to required fidelity can deliver higher throughput of useful E2E entanglement compared to a baseline case with no MHTs.

The remainder of this paper is as follows. Section II provides a brief overview of related work. Our fidelity model and problem formulation are given in Section III. Section IV details our queueing models for the entanglement throughput estimation. Section V presents the optimal MHTs of entangled memories to maximize throughput with guaranteed fidelity. Section VI illustrates the efficacy of our approach via extensive simulations. Finally, Section VII offers concluding remarks.

\myitem{Notation:} $\mathbb{E}[X]$ denotes the expected value of a random variable $X$. $\mathrm{Exp(\lambda)}$ denotes an exponential distribution with parameter $\lambda>0$. For a vector $x$, $x^T$ denotes its transpose. 

\section{Related Work}\label{sec_related_work}

Many studies consider a quantum repeater chain with identical links and examine the probability distribution of waiting time \cite{shchukin2019waiting, simon2007quantum}. A similar model is used in \cite{brand2020efficient, li2021efficient}, but the authors develop polynomial-time algorithms for finding the distributions of waiting time and fidelity of E2E entanglement using the Werner state formulation. These distributions are computed up to a predefined truncation to avoid exponential complexity in \cite{shchukin2019waiting}, which are then used to design repeater cutoff times to maximize the secret key rate. 

The authors of \cite{vardoyan2019stochastic} present a continuous-time Markov chain (CTMC) model for a quantum switch in a star topology, where all links have the same number of qubit buffers, and define the capacity as the maximum possible switching rate when each link has a constant entanglement generation rate. This model also includes decoherence-associated cutoff time for qubit storage but does not consider entanglement noise models and cutoff time optimization.

Recent studies \cite{zubeldia2026matching, fittipaldi2023linear} build on the discrete-time model in \cite{vardoyan2019stochastic} for a quantum switch based on a matching queues network and consider general arrival distributions of entangled qubits. Here, the focus is on the stability of the switch under random request arrivals and analyzing the throughput optimality of the max-weight scheduling policy~\cite{Andrews2007}.

As swapping is an important operation for extending entanglement distance, there have been several swapping policies developed for a repeater chain.  They include, for example, sequential \cite{li2022connection}, doubling \cite{briegel1998quantum, van2013path, caleffi2017optimal}, swap-asap \cite{farahbakhsh2022opportunistic, kamin2023exact}, and heuristics \cite{ghaderibaneh2022efficient, chang2022order, haldar2024fast, mai2025towards}. Most of them are based on discrete-time models with homogeneous quantum memories, and the analysis focuses on the entanglement throughput and fidelity is handled via cutoff time of repeaters.

Our problem is closely related to double-ended queues with renewal arrivals and abandonment (see Section~\ref{subsec_CTMC}). Double-ended queues arise in many applications, e.g., taxi service systems, assembly lines, buyers and sellers in a commodity market, and organ transplant systems; \cite{perry1999perishable} considers an inventory system for perishable commodities with finite shelf size and waiting room for demands, where arrival rates for items and for demands are state-dependent and their waiting times can be either constant or exponentially distributed. The study in \cite{liu2015diffusion} examines fluid and diffusion approximations of queue length for different arrival or traffic patterns. 

We use double-ended queues to model the system state, including the number of entangled qubits, while taking into account nonnegligible swapping times and classical communication delays. Our model allows us to approximate the distribution of E2E entanglement fidelity and design optimal MHTs of quantum memories to maximize the E2E entanglement throughput for a given required fidelity. 



\section{Network Model and Problem Formulation}
    \label{sec:Network_Model}

\begin{figure}
    \centering
    \includegraphics[width=0.9\linewidth]{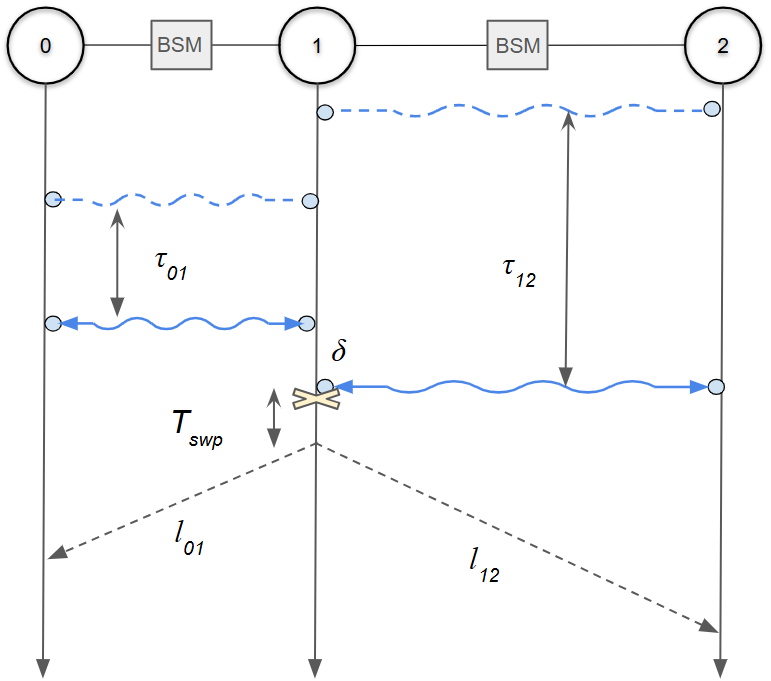}
    \caption{\small Timeline of a path with 3 quantum nodes. Each quantum channel is equipped with a photonic BSM station in the middle. Dashed and solid wavy lines indicate when entanglement generation attempts start with photon emissions and succeed with heralds acknowledged, respectively. Here, $\delta$ is the waiting time to swap, $T_{\rm swp}$ the swapping time, and $\ell_{ij}$ one-way latency between node $i$ and node $j$.}
    \label{fig_3n_path}
\end{figure}


Consider the network in Fig.~\ref{fig_3n_path} with a 3-node quantum path. Each quantum node has a limited number of memories for Bell pair creation. All nodes are connected classically, possibly over quantum channels (using time or frequency multiplexing). An edge $(i, j)$ means that nodes $i$ and $j$ share quantum channels for qubit transmission. For simplicity, we focus on entanglement generation and swapping, but do not consider entanglement purification here.

\vspace{0mm}
\subsection{Qubit Noise Models}
We focus on the two most common noise models: dephasing and depolarization; see, e.g., \cite[Chap.~8]{nielsen2010quantum}. These channels are typically assumed to act independently and sequentially. Since both are completely positive trace-preserving maps, their compositions are also valid quantum channels.
Let $\mathcal{E}_{v}^{(1)}$ and $\mathcal{F}_{w}^{(1)}$ denote the 1-qubit  depolarizing and dephasing channels with parameters $v$ and $w$, respectively, given by 
\begin{align*}
\mathcal{E}_{v}^{(1)}(\rho) = v \rho + (1{-}v)\frac{I_2}{2}, ~~
\mathcal{F}_{w}^{(1)}(\rho) = \frac{1 {+} w}{2}\rho + \frac{1 {-} w}{2} \sigma_z\rho \sigma_z,
\end{align*}
where $\rho$ is a qubit state, and $\sigma_z$ is the Pauli-Z matrix. The composition rules below apply (superscript omitted):
\begin{equation*}
\mathcal{E}\circ  \mathcal{F} \equiv \mathcal{F}\circ  \mathcal{E},~~ \mathcal{E}_{v_1}\circ  \mathcal{E}_{v_2} = \mathcal{E}_{v_1v_2}, ~~ \mathcal{F}_{w_1}\circ  \mathcal{F}_{w_2} = \mathcal{F}_{w_1w_2}.
\end{equation*}

Next, for a 2-qubit system under \textit{independent noises}, the tensor product of 1-qubit depolarizing (resp. dephasing) channels is  $\mathcal{E}_{v_1,v_2}^{(2)} = \mathcal{E}_{v_1}^{(1)}\otimes \mathcal{E}_{v_2}^{(1)}$ (resp. $\mathcal{F}_{w_1, w_2}^{(2)} = \mathcal{F}_{w_1}^{(1)}\otimes \mathcal{F}_{w_2}^{(1)}$) with parameters $(v_1,v_2)$ (resp. $(w_1,w_2)$). Note that
\begin{align}
&( \mathcal{E}_{v_1}^{(1)} \circ \mathcal{F}_{w_1}^{(1)} )
\otimes
( \mathcal{E}_{v_2}^{(1)} \circ \mathcal{F}_{w_2}^{(1)} ) \equiv  \mathcal{E}_{v_1, v_2}^{(2)} \circ \mathcal{F}_{w_1, w_2}^{(2)}.\label{eq_2q_composition}
\end{align}

We are interested in the action of 2-qubit channels on a Bell state, e.g., $
\ket{\Psi^+} = \frac{1}{\sqrt{2}}\bigl(\ket{01} + \ket{10}\bigr)$ with density matrix $ \rho_{\Psi^+} = \ket{\Psi^+}\!\bra{\Psi^+}$. 
Specifically, we have
\begin{align*}
\mathcal{F}_{w_1, w_2}^{(2)} (\rho_{\Psi^+})
&= \frac{1 + w_1w_2}{2}\rho_{\Psi^+} + \frac{1 - w_1w_2}{2}\rho _{\Psi^-}\\
\mathcal{E}_{v_1, v_2}^{(2)}(\rho_{\Psi^+}) 
&= v_1v_2 \rho_{\Psi^+} + (1-v_1v_2)\frac{I_4}{4}. 
\end{align*}
Dephasing mixes a Bell state with its phase-flipped state, while depolarization mixes it with the maximally mixed state $I_4/4$. Both reduce coherence and entanglement, but depolarization is uniform, leading to a fully mixed state at high noise. Crucially, the mixing effect for both depends only on the product of individual parameters. Thus, starting from the state $\ket{\Psi^+}$, independent dephasing or depolarization of the pair can be treated as single qubit noise with a combined parameter: 
\begin{equation*}
    \mathcal{F}_{w_1 w_2}^{(2)} (\rho_{\Psi^+})\equiv \mathcal{F}_{w_1, w_2}^{(2)} (\rho_{\Psi^+}),\quad \mathcal{E}_{v_1 v_2}^{(2)}(\rho_{\Psi^+}) \equiv  \mathcal{E}_{v_1, v_2}^{(2)}(\rho_{\Psi^+}).
\end{equation*}
Note further that these states are Bell-diagonal and remain so under any composition of $\mathcal{E}^{(2)}$ and $\mathcal{F}^{(2)}$.

\vspace{0mm}
\subsection{Quantum Memory} 
Quantum memory platforms vary in suitability for quantum communication. Superconducting qubits have short dephasing and relaxation times ($\mu\text{s}$ to $\text{ms}$), while spin-qubits and trapped-ions have longer times (ms to seconds or more), making them preferable for long ranges. We focus on the latter group for communication qubits, where dephasing is the dominant effect.

We use a combined dephasing and depolarizing channel to model quantum memory noise, with depolarization accounting for other errors (e.g., gate errors). Qubit decoherence at node $i$ over time $t$ is characterized by time-varying depolarizing and dephasing parameters $(v_i(t), w_i(t))$ as 
$$\mathcal{D}_i(t) = \mathcal{E}_{v_i(t)}^{(1)}\circ\mathcal{F}_{w_i(t)}^{(1)}.$$
Under Markovian conditions (e.g., fast fluctuations), parameters decay exponentially as $v_i(t)=e^{-\mu_it}, w_i(t) = e^{-\gamma_i t}$, where $(\mu_i, \gamma_i)$ are depolarizing and dephasing rates. Slow fluctuations, however, may lead to Gaussian decays  $w_i(t)=e^{-\gamma_i^2t^2}$ instead \cite{haffner2008quantum}. Here, we only assume these parameters decay over time. 
Finally, note that time-dependent depolarizing noise is important, particularly when extending memory dephasing time using dynamical decoupling techniques \cite{viola1999dynamical, biercuk2009optimized,  ezzell2023dynamical}. 
Dynamical decoupling introduces periodic qubit rotations, which cause additional bit-flip errors, modeled as time-dependent depolarizing noise. 


\vspace{0mm}
\subsection{Quantum Link}
We consider heralded entanglement generation (HEG) between \textit{emissive} quantum memories using \textit{photonic entanglement swapping} via a BSM station at midpoint of each quantum channel. Our abstract model focuses on entanglement generation and swapping, while acknowledging that practical details (e.g., memory hardware, frequency conversion, flying qubit encoding) are important; here we limit ourselves to two-photon interference with dual-rail encoding.

Entanglement generation attempts between two nodes over lossy channels like fiber have low success probabilities, which depend on hardware efficiency and exponential decay with channel length due to loss. A successful attempt, heralded by a desired BSM outcome, establishes an \textit{elementary entanglement pair} or \textit{quantum link} between them. Since success is rare, they can be modeled using Poisson processes. 

To describe the quality of an elementary entangled pair, suppose that the entangled state generated by a memory at an end node $s \in \{i,j\}$ is given by a perfect Bell state
\begin{equation}
    \ket{\Psi^+}_{s,s'} =  \big(\ket{1}_{s} \ket{0}_{s'} +  \ket{0}_s \ket{1}_{s'} \big)/{\sqrt{2}}, \quad s=i,j
\end{equation}
where $\ket{0}_{s'}$ and $\ket{1}_{s'}$ are the logical qubit states for the emitted optical modes transmitted from node $s$ to the BSM station; physical flying qubits depend on the choice of photonic encoding format. These optical modes must be indistinguishable to interfere at the BSM, which projects two memory qubits onto one of four entangled Bell states, thereby entangling them. 

Assuming standard linear optics, maximum BSM efficiency is~$0.5$ as only $\Psi^\pm$ can be detected \cite{calsamiglia2001maximum} (higher efficiencies are possible with more advanced setups using ancilla photons \cite{bayerbach2023bell} or nonlinear elements). Due to decoherence and imperfect BSM, we model heralded memory-memory state $\rho_{ij}(t)$ as
\begin{align}
    \rho_{ij}(t)&= \big( \mathcal{D}_{i}(t)\otimes \mathcal{D}_{j}(t)\big) \circ \big( \mathcal{E}_{v_{ij}^{\rm B}}^{(2)} \circ \mathcal{F}_{w_{ij}^{\rm B}}^{(2)} \big) (\rho_{\Psi^+}), \label{eq_state_after_photonicBSM}
\end{align}
which incorporates memory decoherence $\mathcal{D}_i(t) \otimes \mathcal{D}_j(t)$ and BSM errors, represented by depolarizing $v_{ij}^{\rm B}$ and dephasing $w_{ij}^{\rm B}$ parameters. 
Here, $w_{ij}^{\rm B}$ is the visibility of the Hong-Ou-Mandel (HOM) interference at the beam-splitter of the BSM, which depends on the indistinguishability degree between two photons, including their temporal, spatial, and spectral overlaps. 
Second, $v_{ij}^{\rm B}$ quantifies the excess noise from the quantum channels to the detector electronics that can cause accidental detections, including background photons (due to spontaneous Raman scattering from classical/control signals, noisy quantum frequency conversion) and detector dark counts during a detection window. It also covers uncontrolled fast-varying phase noise causing bit-flips in flying qubits. These parameters can be estimated via link characterization and depend on the fiber length as well. See~\cite{dhara2023entangling} for an alternative, detailed description of the state without memory decoherence. 


Note also that if the BSM outcome is not the desired state $\Psi^+$, corrective operations are required, e.g., a \mbox{Pauli-Z} gate for a $\Psi^-$ outcome in linear optical BSM. These operations, which can introduce noise (modeled as small 1-qubit depolarization), are not modeled explicitly for simplicity, as their effect can be absorbed into the depolarizing parameter of $\mathcal{D}_i$ or $\mathcal{D}_j$.


\myitem{Link Rate vs. Fidelity:} In general, these BSM errors can be suppressed using filters and reduced BSM detection windows. But, this reduces the elementary entanglement rate. A similar tradeoff exists with purification when multiple memory pairs are used. More studies are needed to compare the benefits of tuning the detection window versus purification on each link, which requires multiple memory pairs. This crucial tradeoff must be considered for each setting. For our modeling, we assume elementary entangled pairs for swapping are Bell states subjected to a combined dephasing and depolarizing channel.

\vspace{0mm}
\subsection{Memory Entanglement Swapping}
When entangled pairs $(0,1)$ and $(1,2)$ are available, node~$1$ performs entanglement swapping on its corresponding local memory qubits to establish an entangled pair $(0,2)$. This process takes~$T_{\rm swp}$ time and can be probabilistic. Swapping methods vary by platform: photonic BSM is fast but probabilistic (e.g., $0.5$ efficiency), while trapped-ion swapping (using 2-qubit Molmer-Sorensen gate and readout) is often deterministic but slower (e.g., $\approx 2\,\textrm{ms}$ for the electron shelving setup in \cite{krutyanskiy2023telecom}) to achieve high fidelity. 
As errors can also occur during swapping, we model them using a 2-qubit depolarizing channel with a fixed parameter $\beta_1$ to capture fidelity loss. While dephasing is negligible during swapping itself, it still needs to be accounted for during qubits' waiting time prior to swapping. We assume independent dephasing for memories within a node and leave potential collective dephasing for future work. 




After each swap, end nodes are heralded using classical communication (taking $\ell_{01}$ and $\ell_{12}$). Like photonic BSMs, success may require local corrections to achieve the desired Bell pair. If an entanglement is acknowledged, end nodes may wait $\tau_{\rm app}$ time to consume it before releasing the memories; otherwise (upon swap failure), memories are released immediately after the heralding signal. All such waiting periods cause independent decoherence of remote qubits. 

\vspace{0mm}
\subsection{E2E State and Fidelity}
To study the maximum rate at which E2E entanglements can be delivered, we assume a negligible application time $\tau_{\rm app}$. The state of E2E entangled pair at time $t$ post-swapping is
\begin{align*}
&\big( \mathcal{E}_{v_0(t)v_2(t) v_{02}^{\rm B}}^{(2)} \circ \mathcal{F}_{w_0(t)w_2(t) w_{02}^{\rm B}}^{(2)} \big) (\rho_{\Psi^+}) \ \text{with}\\
    &v_{02}^{\rm B} = v_{01}^{\rm B} v_{12}^{\rm B} v_1(a_{1L})v_1(a_{1R}) \beta_1 \\
    &w_{02}^{\rm B} = w_{01}^{\rm B} w_{12}^{\rm B} w_1(a_{1L})w_1(a_{1R}),  
\end{align*}
where  $a_{1L}$ and $a_{1R}$ are the ages of qubits at node $1$ when swapped. 
Limiting the entangled qubits at both ends of the path to the arrivals of heralding signals yields
\begin{align}
    \!\!\!\rho_{02} &=  v_{02}\big( \frac{1 + w_{02}}{2}  \rho_{\Psi^+} + \frac{1 - w_{02}}{2} \rho_{\Psi^-}\!\big) + (1-v_{02})\frac{I_4}{4} \nonumber\\
    v_{02} &= v_0(a_{0R})v_2(a_{2L}) v_{02}^{\rm B}, \quad
    w_{02} = w_0(a_{0R})w_2(a_{2L}) w_{02}^{\rm B}, \nonumber
\end{align} 
where $a_{0R}$ and $a_{2L}$ are the ages of memory qubits at nodes 0 and
2, respectively. Thus, the E2E fidelity is 
\begin{equation}
    f_{e2e} = \bra{\Psi^+}\rho_{02}\ket{\Psi^+} = \frac{1}{4}\big(1+v_{02}(1+2w_{02}) \big). \label{e2e_fidelity}
\end{equation}
Here, the effect of depolarization is included in $v_{02}$, and that of dephasing is captured in $w_{02}$; both show memory decoherence due to accumulating memory qubit ages.

\vspace{0mm}
\subsection{Problem Formulation}
Under the above noise model, the E2E entanglement fidelity decreases with the waiting times for swapping entangled pairs; longer waiting times lead to lower fidelity. Thus, to guarantee a minimum fidelity level, it is necessary to impose cutoff times on the quantum memories for entangled pairs, which may differ between the two links due to their distinct properties. Our goal is to optimize these quantum memory cutoff times, i.e., MHTs, to maximize the E2E entanglement throughput while adhering to a specified minimum fidelity constraint. This approach allows us to characterize \textit{the swapping capacity of a quantum repeater as the trade-off between the achievable rate and the resulting fidelity}.

\section{E2E Entanglement Throughput Models}
    \label{sec_throughput}
In this section, we describe analytical models for approximating the throughput for a path with two links shown in Fig.~\ref{fig_3n_path}. We refer to edge ($i$, $i+1$) simply as link $i$, $i = 0, 1$. We start with assumptions and then consider the simpler cases with a unit memory for each link, followed by more general cases with multiple memories. 

\vspace{0mm}
\subsection{Assumptions and Setup}
First, we introduce the following assumptions on the generation and storage of elementary entanglements.

{\bf A1.} When there are free memories available at the end nodes of a link, elementary entangled pairs or simply entanglements are generated on the link according to an independent Poisson process with a possibly state-dependent rate (i.e., available quantum resources at both ends of the links). 

{\bf A2.} Elementary entanglements can be stored in memories for a certain amount of time before they decohere or become unusable. We will handle a given fidelity requirement on provided E2E entanglements by imposing a limit $W_i$ on the holding times of elementary entanglements on link $i$.

{\bf A3.} When there is a match between two elementary entanglements on the two links, swapping takes place immediately. It takes $T_{\rm swp}$ and is successful with probability $q \in (0, 1]$ independently of previous attempts. Its outcome is heralded to the end nodes 0 and 2, which takes $\ell_{01}$ and $\ell_{12}$, respectively.
For simplicity, we assume that the end nodes consume the entangled qubits as soon as they are heralded and then release the memories for new entanglement generation attempts.  


{\bf A4.} Generally, a fixed node initiates every link attempt. For example, if node~1 starts link~0, a new negotiation can piggyback on heralding messages, allowing an immediate start. On the other hand, if node 0 initiates, a delay of $\ell_{01}$ is incurred due to heralding latency. To account for this, we assume that after entanglement swapping, link $i$ can only attempt a new generation with swapped qubits after a delay of $T_i \ge 0$. This delay also incorporates hardware utility cycles, which are assumed to be small compared to communication latency.

\vspace{3pt}
Let us describe each link $i$ using a tuple 
\begin{equation}
    \mathcal{Q}_i = (K_i, \boldsymbol{\lambda}_i, W_i, T_i), \quad i = 0, 1, 
\end{equation}
where $K_i$ denotes the memory capacity, and $\boldsymbol{\lambda}_i \in \mathbb{R}^{K_i}$ is the vector of elementary entanglement arrival rates at different states described below. 
To study the E2E entanglement throughput, we adopt a queueing model to capture the number of elementary entanglements stored in memory for each link;  we refer to the queue associated with link $i$ as $\mathcal{Q}_i$ and say that an entanglement $e_i$ arrives when a new elementary entangled pair is generated on link $i$. When a new entangled qubit is stored in a memory (at both ends of a link), the queue length increases by one. On the other hand, when an entangled qubit is swapped at the repeater or expires, thereby freeing up a memory, the queue length decreases by one. Hence, the length of queue $i$ captures the number of link $i$ elementary entanglements stored in the memories. 


We will consider two cases: (i) unit-memory queues and (ii) multiple-memory queues. The main reason for considering case (i) separately is that it is a canonical setup for realizing long-distance entanglement distribution via swapping, where the dynamics can be modeled accurately. 

\vspace{0mm}
\subsection{Unit-Memory Queues}\label{subsec_3node_path_unit_mem}
In this case, the repeater is equipped with two memories and allocates one memory to each link with $K_0= K_1= 1$. Since there is only one memory available on each link, when $\mathcal{Q}_i$ is empty, elementary entanglements $e_i$ arrive according to a Poisson process with rate $\lambda_i > 0$ ({\bf A1}). An entanglement $e_i$ stored in memory remains valid for a maximum duration of $W_i$. If no swapping occurs with an entanglement $e_{1-i}$ within $W_i$, $e_i$ expires and is discarded ({\bf A2}). After swapping or expiration of an elementary entanglement, it is removed and its queue is reset. After undergoing a reset delay of $T_i$, entanglement generation then resumes on $\mathcal{Q}_i$ ({\bf A4}).

We analyze E2E entanglement throughput using renewal reward theory~\cite{Ross}, where a renewal occurs when \textit{both} queues become empty, starting a cycle. A reward of 1 is assigned if an E2E entanglement is generated in a cycle, and 0 otherwise. The exact analytical throughput expression is rather complex and deferred; here we provide an approximation, which is accurate when reset delays are small compared to MHTs ($W_i, i = 0, 1$) and inter-arrival times. It assumes that when one queue resets upon expiration, the other also resets (pausing arrivals by a small delay), creating shorter ``resetting cycles."



\subsubsection{Approximate Cycles and Swapping}
Define $\Lambda := \lambda_0  + \lambda_1$ as the aggregate arrival rate of the elementary entanglements. Suppose that the reset delays $T_i, i = 0, 1,$ are much smaller than $\Lambda^{-1}$ and that the probability that an entanglement arrives during a reset period is small. In this case, we can assume that a renewal occurs after swapping of entanglements or expiration of a stored entanglement in a queue. 

Suppose that an entanglement $e_i$ arrives at queue $i$ and must wait for an entanglement $e_{1-i}$ for swapping. If its MHT is reached before an entanglement $e_{1-i}$ arrives, the cycle ends without a match. Otherwise, a match happens upon the arrival of an entanglement $e_{1-i}$ and swapping takes a fixed amount of time $T_{\rm swp}$ and is successful with probability $q$. In this case, the amount of time it takes to complete the swapping and be ready for new entanglement generations is approximately
\begin{align}
    T_p := T_{\rm swp}+\max(T_0, T_1)  \ .
\end{align}
Let $\alpha_i = 1-e^{-\lambda_{1-i}W_i}$ be the probability that, conditional on the event that an entanglement $e_i$ arrives first during a cycle, an entanglement $e_{1-i}$ arrives within $W_i$ to trigger swapping. 
\begin{theorem}
The following expression provides an approximate E2E entanglement rate or throughput: 
\begin{equation}
R_{um} = \frac{q(\alpha_0 \lambda_0 + \alpha_1 \lambda_1)}{1+\sum_{i=0}^{1}\alpha_i\lambda_i(T_p+\lambda_{1-i}^{-1}) + (1-\alpha_i)\lambda_iT_i} 
    \label{eq_Rate_UnitCapacity}
\end{equation}
\end{theorem}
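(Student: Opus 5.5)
We use the renewal reward theorem under the approximate cycle structure described above. A cycle starts when both queues are empty and arrivals are active. It ends either after a swap and its reset, or after the expiration of the single stored entanglement and the corresponding reset. By the renewal reward theorem \cite{Ross}, the long-run E2E throughput is $R_{um}=\mathbb{E}[\text{reward per cycle}]/\mathbb{E}[\text{cycle length}]$. So the plan is to compute these two expectations and then divide numerator and denominator by $\mathbb{E}$ of the initial idle time.

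\textbf{Step 1: first arrival.} At the start of a cycle both queues are empty and generate independently as Poisson processes. The time until the first arrival is therefore $\mathrm{Exp}(\Lambda)$, with mean $\Lambda^{-1}$. The first arrival is $e_i$ with probability $\lambda_i/\Lambda$, by competition of exponentials.

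\textbf{Step 2: condition on which entanglement arrived first.} Suppose $e_i$ arrived first. By memorylessness, the wait for $e_{1-i}$ is $\mathrm{Exp}(\lambda_{1-i})$.
\begin{itemize}
\item With probability $\alpha_i$ it arrives within $W_i$. The conditional mean waiting time is then $\mathbb{E}[X\mid X\le W_i]$ for $X\sim\mathrm{Exp}(\lambda_{1-i})$. Swapping and resetting then add $T_p$. The reward is $1$ with probability $q$.
\item With probability $1-\alpha_i$ the entanglement expires after exactly $W_i$. A reset of length $T_i$ follows, and the reward is $0$.
\end{itemize}
This gives an expected reward of $q(\lambda_0\alpha_0+\lambda_1\alpha_1)/\Lambda$. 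The expected cycle length is
\[
\Lambda^{-1}+\sum_{i}\frac{\lambda_i}{\Lambda}\Big[\alpha_i T_p+\alpha_i\,\mathbb{E}[X\mid X\le W_i]+(1-\alpha_i)(W_i+T_i)\Big].
\]

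\textbf{Step 3: simplify to the stated form.} Since $X\wedge W_i$ has mean $\alpha_i/\lambda_{1-i}$, we have
\[
\alpha_i\,\mathbb{E}[X\mid X\le W_i]+(1-\alpha_i)W_i=\frac{\alpha_i}{\lambda_{1-i}}.
\]
This identity collapses the waiting-time terms into $\alpha_i\lambda_i\lambda_{1-i}^{-1}$. Multiplying numerator and denominator by $\Lambda$ then yields exactly \eqref{eq_Rate_UnitCapacity}.

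The main obstacle is justifying the approximation rather than the algebra. Specifically, we must argue that no arrival occurs on the other link during the resets. We also need that after an expiration on link $i$, the other queue is effectively reset too, so that the cycle really regenerates with both queues empty. This follows from the standing assumption $T_i\ll\Lambda^{-1}$. We would state explicitly that the neglected events have probability $O(\Lambda T_i)$, which is what makes the formula approximate rather than exact.
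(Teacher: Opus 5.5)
Your proposal is correct and follows essentially the same route as the paper: renewal--reward with cycles restarting from empty queues, conditioning on which link produces the first entanglement (probability $\lambda_i/\Lambda$), an expected reward of $q(\lambda_0\alpha_0+\lambda_1\alpha_1)/\Lambda$, and an expected cycle length built from the idle time $\Lambda^{-1}$, the truncated wait $\mathbb{E}[\min(X,W_i)]=\alpha_i/\lambda_{1-i}$, and the swap/reset term $\alpha_iT_p+(1-\alpha_i)T_i$. The differences are cosmetic. You split the wait by outcome and recombine it by total expectation, whereas the paper integrates the truncated minimum directly. Your $O(\Lambda T_i)$ remark only sketches an error bound for the simultaneous-reset simplification, which the paper simply adopts as a modeling assumption.
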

\begin{proof}
Let $T_{\text{cycle}}$ be the duration of a cycle (which is a random variable) and $p_{\text{match}}$ the probability of having a match in a cycle. 
Thus, the expected reward of a cycle is $q \cdot p_{\text{match}}$. Then, the renewal reward theorem~\cite{Ross} tells us that, with probability 1 
the long-term reward rate is given by
\begin{align}
R_{um} = {q \cdot p_{\text{match}}}/{\mathbb{E}[T_{\text{cycle}}]} \ . 
    \label{eq:RRT}
\end{align}
We proceed to evaluate those terms. Let $j=1-i$ below.

\vspace{3pt}
$\bullet$ {$p_{\text{match}}$}: Let $G_i, i = 0, 1$, be the event that the first entanglement in a cycle is $e_i$.  
Since we assume that the entanglements arrive according to mutually independent Poisson processes, the probability of event $G_i$ is  $\frac{\lambda_i}{\Lambda}$. Conditional on $G_i$, the probability that an entanglement $e_{j}$ arrives within the MHT $W_i$ is $\alpha_i = 1-e^{-\lambda_{j} W_i}$ due to the memoryless property of an exponential distribution. Thus, by conditioning on the first link that generates an entanglement and using the law of total probability, we can compute the probability that swapping happens in a cycle, which is equal to  
\begin{equation}
p_{\text{match}} = (\lambda_0 \alpha_0+\lambda_1 \alpha_1)/{\Lambda}.
\end{equation}

\vspace{3pt}
$\bullet$ {$\mathbb{E}[T_{\text{cycle}}]$:}
A renewal cycle comprises of three periods.

i) \textit{Idle period:} Starting with empty queues, the time until the first entanglement arrival, which is the minimum between two independent exponential random variables with parameters $\lambda_0$ and $\lambda_1$, is exponentially distributed with parameter $\Lambda$ and its expected value is $\Lambda^{-1}$.
    
ii) \textit{Waiting period:} If the first arrived entanglement is $e_i$, its qubits will be stored in memories until either an entanglement $e_{j}$ arrives in time or it expires after its MHT $W_i$. Thus, its waiting time can be modeled using the minimum of $Y_{j} \sim \mathrm{Exp}(\lambda_{j})$ and the fixed MHT $W_i$, and the expected waiting time $\mathbb{E}[\min(Y_{j}, W_i)]$ is given by 
\begin{align}
\int_0^{W_i} \!\! y \lambda_j e^{-\lambda_j y} dy 
    + \int_{W_i}^\infty \!\! W_i \lambda_j e^{-\lambda_j y} dy 
    = \frac{1 {-} e^{-\lambda_j W_i}}{\lambda_j} = \frac{\alpha_i}{\lambda_j}.\nonumber
\end{align}
    
iii) \textit{Swapping/reset period:} Suppose that an entanglement $e_i$ arrives first. If an entanglement $e_{j}$ arrives before $W_i$ (with probability $\alpha_i$), swapping is performed and both queues are reset in approximately $T_p=T_{\rm swp}+\max\{T_0, T_1\}$. On the other hand, if $e_i$ expires instead (with probability $1-\alpha_i$), both queues are reset after $T_i$. Thus, conditional on $G_i$, the expected value of the swapping/reset period is $\alpha_i T_p + (1-\alpha_i) T_i$. 

Now let $H_i = \alpha_i(T_p+\lambda_{j}^{-1}) + (1-\alpha_i)T_i$, $i = 0, 
1$. From the above discussion, the expected duration of a cycle can be computed by conditioning on the first arriving entanglement:
\[
\mathbb{E}[T_{\text{cycle}}] = (1 + \lambda_0 H_0 + \lambda_1 H_1)/{\Lambda}.
\]
Substituting $p_{\text{match}}$ and $\mathbb{E}[T_{\text{cycle}}]$ in \eqref{eq:RRT} yields \eqref{eq_Rate_UnitCapacity} as desired.
\end{proof} 

When the entanglement MHTs are sufficiently large and $1-\alpha_i = e^{-\lambda_{1-i} W_i} \approx 0$, stored entanglements will rarely expire and thus be swapped with a probability close to 1. In this case, the throughput is approximately given by  
\[ 
R_{um} \approx {q \Lambda}/
    {(1+\lambda_1\lambda_0^{-1}  +\lambda_0\lambda_1^{-1}+T_p\Lambda)}.
\]
In addition, if classical delays are small and entanglement generation rates are low so that $T_p \ll \min (\lambda_0^{-1}, \lambda_1^{-1} )$, then $R_{um} \approx q\,\Lambda/(1+\lambda_0 \lambda_1^{-1} + \lambda_1 \lambda_0^{-1})$. Finally, if two links are identical and $\lambda_0 = \lambda_1 = \lambda$, then $R_{um} \approx \frac{2}{3}\lambda q$, which reduces to the well-known fact that if $X_1, X_2 \sim \mathrm{Exp}(\lambda)$ are independent, $\mathbb{E}[\max(X_1, X_2)] =1.5/\lambda$. 
    
In the other extreme case when MHTs $W_i$, $i = 0, 1$, are small compared to $\min (\lambda_0^{-1}, \lambda_1^{-1} )$, the assumption that the entanglement generations can be approximated using Poisson processes becomes less accurate and our model tends to overestimate the E2E throughput. While it is possible to correct this using a different arrival distribution, the analysis becomes more involved and we omit its discussion here. 



\vspace{0mm}
\subsection{Multiple-Memory Queues}\label{subsec_CTMC}
When multiple memories are available for storing entangled pairs on each link, the system need not reset after each swapping because there can be other entangled pairs stored in the memories for one of the two links. As a result, we can no longer use renewal reward theory.  
Instead, we consider the system as a double-ended queue with possibly state-dependent arrival rates, departure rates, and post-processing delays. Specifically, we approximate its dynamics using a CTMC as described below.

\subsubsection{CTMC Model} 
We consider queues with state-dependent Poisson arrival processes. If $\mathcal{Q}_i$ has $n < K_i$ occupied pairs, its arrival rate is $\lambda_{i|n}>0$. If it is full, $\lambda_{i|K_i} = 0$ and no more entanglements can be generated until a memory is freed up. 

To facilitate analysis, instead of dealing with fixed entanglement MHTs, we approximate them using independent exponential random variables; entangled pairs stored in $\mathcal{Q}_i$ have exponentially distributed MHTs with parameter $\theta_i = 1/W_i$. Similarly, we also model the swapping time and reset delays as independent exponential random variables with parameters $\mu_p = 1 / T_{\rm swp}$ and $\mu_i = 1/T_i$, $i = 0, 1,$ respectively. 
%
Under these Markovian approximations, our model only needs to keep track of the number of entanglements that are either available for swapping or currently undergoing swapping and the number of memories that are being reset following swapping or an expiration of a stored entangled pair. 

\vspace{3pt}
\noindent\textbf{State space}:
Let $(n,m_p,m_0, m_1)$ be the system state, where
\begin{itemize}
    \item $n\in \mathbb{Z}$: the net number of waiting entangled pairs; 
    $n<0$ (resp. $n>0$) means that there are $|n|$ stored entangled pairs in $\mathcal{Q}_0$ (resp. $\mathcal{Q}_1$) and none in $\mathcal{Q}_1$ (resp. $\mathcal{Q}_0$).
    
    \item $m_p \in \mathbb{Z}_{+}$: the number of pairs that are being swapped
    
    \item $m_i \in \mathbb{Z}_{+}$: the number of pairs
    being reset on $\mathcal{Q}_i, i= 0, 1.$
\end{itemize}
Let $N_i$ denote the queue lengths, i.e., 
\begin{equation*}
N_0 = \max(0,-n) + m_p + m_0, \quad
N_1 = \max(0,n) + m_p + m_1.
\end{equation*}
The finite state space of the CTMC is given by
$$
\mathcal{S} = \left\{ (n, m_p, m_0, m_1) \in \mathbb{Z}\times\mathbb{Z}_+^3 \; | \;
N_i \le K_i,\ i=0,1
\right\}.
$$

\begingroup
\setlength{\tabcolsep}{4pt} 
\renewcommand{\arraystretch}{1.15} 
\begin{table}
    \centering
    \begin{tabular}{l|c|l|c}
    \toprule
        event & cond. & next state (if feasible) & rate \\
    \midrule
        queue 0 arrival, no swap & $n \le 0$ & $(n{-}1, m_p, m_0, m_1)$       & $\lambda_{0|N_0}$  \\
        queue 0 arrival, swap    & $n > 0$   & $(n{-}1, m_p{+}1, m_0, m_1)$   & $\lambda_{0|N_0}$ \\
        queue 1 arrival, no swap & $n \ge 0$ & $(n{+}1, m_p, m_0, m_1)$       & $\lambda_{1|N_1}$ \\
        queue 1 arrival, swap    & $n < 0$   & $(n{+}1, m_p{+}1, m_0, m_1)$   & $\lambda_{1|N_1}$  \\
        an $e_0$ expires      & $n < 0$   & $(n{+}1, m_p, m_0{+}1, m_1)$     & $|n|\theta_1$ \\
        an $e_1$ expires      & $n > 0$   & $(n{-}1, m_p, m_0, m_1{+}1)$     & $n\theta_2$  \\
        a queue 0 memory resets            & $m_0 > 0$   &  $(n, m_p, m_0{-}1, m_1)$      & $m_1\mu_1$ \\
        a queue 1 memory resets            & $m_1 > 0$   &  $(n, m_p, m_0, m_1{-}1)$      & $m_2\mu_2$ \\
        pairing finished        & $m_p > 0$   &  $(n, m_p{-}1, m_0{+}1, m_1{+}1)$      & $m_p\mu_p$ \\
    \bottomrule
    \end{tabular}
    \caption{\small Transitions from state $(n,m_p,m_0, m_1)$}
    \label{table_CTMC_state_transition}
\end{table}
\begingroup


\noindent\textbf{State transitions}: All feasible transitions from each state $(n,m_p,m_0, m_1) \in \mathcal{S}$ are listed in Table~\ref{table_CTMC_state_transition}. Since every state communicates with the state ${\bf 0}$, this CTMC is irreducible. Let $Q$ be the generator matrix of the CTMC that contains the transition rates. 
The following is a standard result~\cite{GrimmettStirzaker}.
\begin{lemma}
The steady-state distribution $\boldsymbol{\pi} = (\pi_s: s\in \mathcal{S})$ of an irreducible CTMC can be found by solving 
\begin{equation} 
\boldsymbol{\pi}Q = {\bf 0}, \quad \textstyle\sum_{s \in \mathcal{S}} \pi_s = 1. \label{CTMC_steady_state_prob}
\end{equation}
\end{lemma}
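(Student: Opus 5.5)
The plan is to prove the lemma for a CTMC on a finite state space such as $\mathcal{S}$; finiteness holds here because $N_i \le K_i$ bounds every coordinate of the state. Let $P(t)=e^{tQ}$ denote the transition semigroup, which solves the Kolmogorov forward equation $P'(t)=P(t)Q$. A probability vector $\boldsymbol{\pi}$ is stationary exactly when $\boldsymbol{\pi}P(t)=\boldsymbol{\pi}$ for all $t\ge 0$.

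\textbf{Step 1: stationarity is equivalent to $\boldsymbol{\pi}Q={\bf 0}$.} Differentiating $\boldsymbol{\pi}P(t)=\boldsymbol{\pi}$ at $t=0$ gives $\boldsymbol{\pi}Q={\bf 0}$. Conversely, if $\boldsymbol{\pi}Q={\bf 0}$, expand the exponential series:
\[
\boldsymbol{\pi}e^{tQ}=\boldsymbol{\pi}+\sum_{k\ge1}\frac{t^k}{k!}\,\boldsymbol{\pi}Q^k=\boldsymbol{\pi}.
\]
On a finite space the series converges absolutely, so there are no technical issues.

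\textbf{Step 2: existence and uniqueness.} Irreducibility was noted above: every state communicates with ${\bf 0}$. It implies that $P(t)$ has all entries strictly positive for every $t>0$. Fix $t=1$ and apply the Perron--Frobenius theorem to the positive stochastic matrix $P(1)$. This gives a unique probability vector fixed by $P(1)$, and it is strictly positive.

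It remains to link this fixed vector to $Q$. I would argue on the kernel of $Q$ directly, which I expect to be the main (though mild) obstacle, because it requires connecting the kernel of $Q$ with that of $P(1)-I$. The argument runs as follows:
\begin{itemize}
\item Rows of $Q$ sum to zero, so $Q{\bf 1}={\bf 0}$ and $\operatorname{rank}Q\le|\mathcal{S}|-1$. Hence a nonzero left null vector exists.
\item Any left null vector of $Q$ is fixed by every $P(t)$, by the series argument of Step 1.
\item By Perron--Frobenius, the fixed space of $P(1)$ is one-dimensional. So $\ker(Q^T)$ is one-dimensional and spanned by a vector whose entries all have the same sign.
\end{itemize}
Normalizing that vector with $\sum_s\pi_s=1$ then selects exactly one solution of \eqref{CTMC_steady_state_prob}.

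\textbf{Step 3: interpretation as long-run fractions of time.} To justify calling $\boldsymbol{\pi}$ the steady-state distribution, invoke the ergodic theorem for finite irreducible CTMCs~\cite{GrimmettStirzaker}: $P(t)\to{\bf 1}\boldsymbol{\pi}$ as $t\to\infty$, and time averages converge almost surely to $\boldsymbol{\pi}$. This is what later allows throughput to be computed as $\sum_s \pi_s\cdot(\text{swap completion rate } m_p\mu_p)\cdot q$.

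Computationally, \eqref{CTMC_steady_state_prob} is solved by replacing one redundant column of $Q$ with ${\bf 1}$ and solving the resulting nonsingular linear system.
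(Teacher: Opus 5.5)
Your proof is correct. It differs from the paper only in that the paper offers no argument at all: it states the lemma as a standard fact and cites Grimmett and Stirzaker. You instead give a self-contained linear-algebra proof on a finite state space. The exponential series shows that $\boldsymbol{\pi}Q=\mathbf{0}$ is equivalent to $\boldsymbol{\pi}e^{tQ}=\boldsymbol{\pi}$. Perron--Frobenius applied to the positive stochastic matrix $e^{Q}$ makes its left fixed space a line spanned by a strictly positive vector. Since $Q\mathbf{1}=\mathbf{0}$ guarantees a nonzero left null vector of $Q$ lying in that line, the left kernel of $Q$ is exactly that line, and the normalization selects a unique, strictly positive solution.

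Your route buys explicit existence, uniqueness and positivity using nothing beyond matrix analysis. The cited result also covers countable state spaces, where solving $\boldsymbol{\pi}Q=\mathbf{0}$ identifies the stationary law only under extra hypotheses such as non-explosion. Since $N_i\le K_i$ makes $\mathcal{S}$ finite, your restriction is exactly the scope the paper needs.

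The one fact you assert without proof is strict positivity of $e^{tQ}$ for $t>0$. It follows in one line by uniformization: for $c\ge\max_s|Q_{ss}|$ we have $e^{tQ}=e^{-ct}\sum_{k\ge 0}t^k(Q+cI)^k/k!$ with $Q+cI$ entrywise nonnegative. Irreducibility then gives, for each pair of states, some power of $Q+cI$ with a positive entry in that position.

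Finally, your closing remark expresses throughput through the swap-completion rate $\sum_s\pi_s m_p\mu_p$, whereas the paper's formula for $R_{mm}$ uses the swap-initiation rate over $\mathcal{S}_0\cup\mathcal{S}_1$. These agree in steady state by flow balance on $m_p$, so there is no conflict.
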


Consider the states that, following a feasible transition, can trigger new swapping of two entanglements, i.e., states with waiting entanglements in queue $i$ while queue $1-i$ is not full. These are the states in $\mathcal{S}_0 \cup \mathcal{S}_1$, where 
\begin{align*} 
\mathcal{S}_0 &= \left\{ (n, m_p, m_0, m_1) \in \mathcal{S} \ | \
n<0,N_1 < K_1 \right\} \\
\mathcal{S}_1 &= \left\{ (n, m_p, m_0, m_1) \in \mathcal{S} \ | \
n>0, N_0 < K_0 \right\}. 
\end{align*}
The rate at which swapping occurs at the states in $\mathcal{S}_i$ is the sum of probabilities of the states multiplied by the entanglement generation rate of link $1-i$. As a result, we have the following.
\begin{theorem} 
The throughput of the CTMC is given by 
    \begin{equation}
        R_{mm} = \textstyle q \big( \sum_{s\in \mathcal{S}_0} \pi_{s} \cdot\lambda_{1|N_1}  +  \sum_{s\in \mathcal{S}_1} \pi_{s} \cdot\lambda_{0|N_0} \big).
    \end{equation}
\end{theorem}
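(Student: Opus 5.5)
The plan is to read $R_{mm}$ as the long-run rate of \emph{successful} swaps of the CTMC and to compute it from the steady state of the preceding lemma. By irreducibility and finiteness of $\mathcal{S}$, the distribution $\boldsymbol{\pi}$ is unique. The ergodic theorem for CTMCs then says that, with probability one, the long-run fraction of time spent in any state $s$ equals $\pi_s$.

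First, I identify exactly which transitions in Table~\ref{table_CTMC_state_transition} correspond to a swap. These are the two rows ``queue $i$ arrival, swap'', i.e., the transitions that increment $m_p$. A queue~1 arrival triggers a swap precisely when $n<0$ and the arrival is feasible, which requires $N_1<K_1$ so that $\lambda_{1|N_1}>0$. This is exactly the set $\mathcal{S}_0$, and the transition fires at rate $\lambda_{1|N_1}$. Symmetrically, a queue~0 arrival triggers a swap exactly from states in $\mathcal{S}_1$, at rate $\lambda_{0|N_0}$. States with $n=0$ cannot trigger a swap, and transitions out of full queues have rate zero, so no other contributions arise.

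Second, I count these transitions. Let $J_{s\to s'}(t)$ be the number of $s\to s'$ jumps in $[0,t]$. A standard consequence of the ergodic theorem (equivalently, of renewal-reward applied to successive returns to $s$) is
\[
\lim_{t\to\infty} J_{s\to s'}(t)/t = \pi_s Q_{ss'} \quad \text{almost surely}.
\]
Summing over the swap transitions identified above gives the long-run swap-attempt rate
\[
\sum_{s\in\mathcal{S}_0}\pi_s\lambda_{1|N_1}+\sum_{s\in\mathcal{S}_1}\pi_s\lambda_{0|N_0}.
\]

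Third, I account for success. By {\bf A3}, each swap attempt succeeds independently with probability $q$, independently of the chain's trajectory; the chain's evolution does not depend on the outcome, since memories are released either way. By the strong law of large numbers applied to the i.i.d.\ Bernoulli($q$) marks on the attempt sequence, the long-run rate of successful swaps is $q$ times the attempt rate. This yields the stated $R_{mm}$.

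The only delicate point is justifying the almost-sure jump-count limit. I would handle it by treating each jump as a reward in a renewal-reward argument over regeneration cycles at state ${\bf 0}$. The expected cycle length is finite because the chain is positive recurrent, and within a cycle the expected number of $s\to s'$ jumps equals the expected time spent in $s$ multiplied by $Q_{ss'}$. The identity then follows by the same renewal-reward theorem~\cite{Ross} used for Theorem~1.
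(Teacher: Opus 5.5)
Your proposal is correct and takes the same approach as the paper, which justifies the formula in one line: the swap rate is the stationary flux $\sum_{s\in\mathcal{S}_0}\pi_s\lambda_{1|N_1}+\sum_{s\in\mathcal{S}_1}\pi_s\lambda_{0|N_0}$ through the two ``arrival, swap'' transitions, multiplied by $q$. Your version makes that line rigorous via the almost-sure jump-count limit $J_{s\to s'}(t)/t\to\pi_sQ_{ss'}$ and the strong law for the independent Bernoulli($q$) success marks.
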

This tells us that computing the E2E entanglement throughput reduces to solving \eqref{CTMC_steady_state_prob} for the steady-state distribution $\boldsymbol{\pi}$ of the CTMC. This is computationally feasible even for a reasonably large system. 

\subsubsection{Simplified Settings}
\label{subsubsec_Rate_small_T}
When reset delays $T_0, T_1$ and swapping time $T_{\rm swp}$ are small compared to mean arrival times of the queues $1/\lambda_{i,N_i}$, memories reset almost instantaneously after matching or expiration. Thus, $m_0, m_1$, and $m_p$ are equal to zero with high probability. As a result, we can approximate the state space using a one-dimensional integer line: $n \in \{-K_0, \dots, K_1\}$. This is a Birth-Death process (BDP) that is much simpler than the general case above; see, e.g., \cite{perry1999perishable, vardoyan2019stochastic} for a similar analysis. As a result, we omit the proof of the following corollary.

\begin{corollary}\label{cor_Rate_small_T}
When $T_p \lambda_{i|N_i}\ll 1$, $i = 0, 1$, the E2E entanglement throughput can be approximated using
    \begin{equation}\label{eq_Rate_small_T}
        R_{mm} \approx {q ( \lambda_{1|0}E_{0}+ \lambda_{0|0} E_{1}  )}/{(1+  E_{0} + E_{1})}
    \end{equation}
\[
\text{with}~E_{0}=\! \sum_{n=1}^{K_0} \prod_{k=1}^n \frac{\lambda_{0|k-1}}{\lambda_{1|0}+k\theta_0} ~~\text{and}~~ E_{1}= \!\sum_{m=1}^{K_1} \prod_{k=1}^m \frac{\lambda_{1|k-1}}{\lambda_{0|0}+k\theta_1}.
\]
\end{corollary}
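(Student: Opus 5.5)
The plan is to collapse the CTMC of Table~\ref{table_CTMC_state_transition} to a birth--death process on $n\in\{-K_0,\dots,K_1\}$, solve it in product form, and substitute the result into the throughput expression of the preceding theorem.

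\textbf{Reduction to a birth--death process.} When $T_p\lambda_{i|N_i}\ll 1$, the swapping and reset transitions (rates $m_p\mu_p$ and $m_i\mu_i$) are much faster than arrivals. So I would treat every swap or expiration as freeing its memories instantaneously. In that limit $m_p=m_0=m_1=0$, and the state collapses to $n\in\{-K_0,\dots,K_1\}$, with queue lengths $N_0=\max(0,-n)$ and $N_1=\max(0,n)$.

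\textbf{Transitions of the reduced chain.} From Table~\ref{table_CTMC_state_transition} the reduced transitions are as follows.
\begin{itemize}
\item For $n\le 0$, a queue~0 arrival moves $n\to n-1$ at rate $\lambda_{0|-n}$.
\item For $n<0$, the state moves $n\to n+1$ either by a queue~1 arrival, which triggers a swap, at rate $\lambda_{1|0}$ (queue~1 is empty), or by an expiration at rate $|n|\theta_0$.
\item Symmetrically, for $n\ge 0$ the chain moves $n\to n+1$ at rate $\lambda_{1|n}$, and for $n>0$ it moves $n\to n-1$ at rate $\lambda_{0|0}+n\theta_1$.
\end{itemize}
This is a finite, irreducible birth--death chain, so its stationary distribution follows from detailed balance across each edge $\{k-1,k\}$.

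\textbf{Stationary distribution.} Detailed balance gives
\[
\pi_{-n}=\pi_0\prod_{k=1}^{n}\frac{\lambda_{0|k-1}}{\lambda_{1|0}+k\theta_0},\qquad
\pi_{m}=\pi_0\prod_{k=1}^{m}\frac{\lambda_{1|k-1}}{\lambda_{0|0}+k\theta_1}.
\]
Summing, $\sum_{n=1}^{K_0}\pi_{-n}=\pi_0E_0$ and $\sum_{m=1}^{K_1}\pi_m=\pi_0E_1$. Normalization then gives $\pi_0=1/(1+E_0+E_1)$.

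\textbf{Throughput.} In the reduced chain, $\mathcal{S}_0$ is the set $\{n<0\}$: there $N_1=0<K_1$ automatically, and a link~1 arrival at rate $\lambda_{1|0}$ triggers a swap. Likewise $\mathcal{S}_1=\{n>0\}$, with swap-triggering rate $\lambda_{0|0}$. Substituting into the throughput formula of the preceding theorem gives
\[
R_{mm}\approx q\bigl(\lambda_{1|0}\pi_0E_0+\lambda_{0|0}\pi_0E_1\bigr),
\]
which is \eqref{eq_Rate_small_T}.

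\textbf{Main obstacle.} The weak point is justifying the collapse itself, i.e., that the probability mass on states with $m_p+m_0+m_1>0$ vanishes as $T_p\lambda\to 0$. I would argue this by time-scale separation. Each visit to such a state lasts time of order $T_p$. Over an arrival interval of order $1/\lambda$, the chance that a new arrival occurs while memories are still resetting is of order $T_p\lambda$. Hence the long-run fraction of time spent in these states, and the resulting distortion of the aggregated chain's rates, is $O(T_p\lambda)$.

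A rigorous version would use singular perturbation of the generator $Q$, aggregating over the fast reset and swap states. Since the statement is only an approximation ("$\approx$"), I would present this step heuristically, consistent with the paper's remark that $m_0,m_1,m_p$ are zero with high probability.
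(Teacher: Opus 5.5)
Your proposal is correct and takes the same route the paper indicates: collapse the CTMC to the birth--death process on $n\in\{-K_0,\dots,K_1\}$ by setting $m_p=m_0=m_1=0$, solve it in product form by detailed balance (your use of $\theta_0,\theta_1$ for the expiration rates matches the corollary), and substitute into the throughput formula of the preceding theorem; the paper omits these details and points to similar birth--death analyses. Your heuristic time-scale-separation argument, that the fraction of time with $m_p+m_0+m_1>0$ is $O(T_p\lambda)$, goes slightly beyond the paper, which simply asserts that these components vanish with high probability, and it is adequate for an approximate statement.
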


With a change of variables $\tilde{\alpha}_i = \frac{\lambda_{i|0}}{\lambda_{1-i|0}}E_{1-i},$ 
\eqref{eq_Rate_small_T} becomes
$R_{mm} = q ( \tilde{\alpha}_0\lambda_{0|0}+ \tilde{\alpha}_1\lambda_{1|0})/(1+  \sum_{i=0}^1 \tilde{\alpha}_i\lambda_{i|0}/\lambda_{1-i|0})$. 
This is similar to the rate in \eqref{eq_Rate_UnitCapacity} for unit-memory queues except for the terms in the denominator of \eqref{eq_Rate_UnitCapacity}, which account for the reset delays. This is because, with unit-memory queues, no new entanglement can be generated during resets. However, with multiple-memory queues, new attempts can start immediately after swapping when another pair of memories is available. 





\section{Maximizing Throughput for a Given Fidelity}
\label{sec_maximize_Rate_Fidelity}
For ease of exposition, we first focus on Markovian dephasing memories with $w_i(t) = e^{-\gamma_i t}$ and then discuss the general case later. In this case, $w_{02} =w_{01}^{\rm B}w_{12}^{\rm B} e^{-A}$ with 
\begin{equation}
    A = \gamma_0a_{0R} + \gamma_2a_{2L} + \gamma_1(a_{1L}+a_{1R})
\end{equation}
which can be viewed as the total age of all qubits weighted by their dephasing rates. Recall that $a_{1L}$ and $a_{1R}$ are the ages of qubits at node $1$ when they are swapped. 

\vspace{0mm}
\subsection{Total Age of Qubits and Waiting Time} 
To understand better the stochastic nature of the delivered E2E entanglement fidelity, let us elaborate on the total age of all qubits: upon receiving heralding messages, we have $a_{0R} = a_{1L}+T_{\rm swp}+\ell_{01}$ and $a_{2L} = a_{1R}+T_{\rm swp}+\ell_{12}$. 
Thus, 
\begin{align}\label{eq_total_age_3node_path}
    {A} &=  \gamma_0(\ell_{01}+T_{\rm swp}) + \gamma_2(\ell_{12}+T_{\rm swp}) \nonumber\\
    &\quad + (\gamma_0+\gamma_1)a_{1L}+ (\gamma_1+\gamma_2)a_{1R}. 
    \end{align}
Note that the ages of the swapped qubits $a_{1L}$ and $a_{1R}$ depend on which entangled pair is heralded first at node $1$, and how long it has to wait for the other to be heralded. 
In our queueing model, this means that they depend on which queue experiences an arrival first and how long it  must wait for an arrival at the other queue. 
We denote this waiting time at the first queue by $\delta_1$. 
If the first arrival is at queue $0$, then $a_{1L} = \tau_{01} + \delta_1$ and $a_{1R} = \tau_{12}$. Similarly, if the first arrival is at queue $1$, $a_{1L} = \tau_{01}$ and $a_{1R} = \tau_{12}+\delta_1$, where $\tau_{ij}$ is the duration of a successful entanglement generation attempt on link $i$. In view of this, let $\gamma_{ij} = \gamma_i +\gamma_j$ denote the combined dephasing rate of the pair $(i,j)$ and rewrite ${A}$ as 
\begin{align}\label{eq_total_age_3node_path_RV}
{A} &= {A}_0 + Z \\
{A}_0 &= \gamma_0\ell_{01} + \gamma_2\ell_{12} + \gamma_{02}T_{\rm swp} + \gamma_{01}\tau_{01}+ \gamma_{12}\tau_{12}, \nonumber
\end{align}
where ${A}_0$ is fixed and $Z$ is a nonnegative random variable representing the random part of $A$: $Z{=}\gamma_{01}\delta_1$ (resp. $Z{=}\gamma_{12}\delta_1$) if the first arrival is at $\mathcal{Q}_0$ (resp. $\mathcal{Q}_1$). Because the waiting time $\delta_1$ is upper-bounded by the respective entanglement MHT in both cases, we can bound the total age as follows:
\begin{align}
    {A} \le {A}_0 + \max (\gamma_{01}W_{0}, \gamma_{12}W_{1}) =:{A}_{\max}\label{eq_Abar_max}.
\end{align}
This bound holds regardless of the distribution of $Z$ and can be used to enforce a minimum fidelity requirement as shown in the following subsections. 

Let us now analyze the total age ${A}$ more carefully using our models described in the previous section. Due to limited space, we consider here only the case with unit-memory queues; the multi-memory case is more involved as it depends also on the matching policy for swapping. The probability density function (PDF) of waiting time $\delta_1$ can be studied using a 2-queue model with
$
\mathcal{Q}_{0} {=} (1, \lambda_{0}, W_{0}, \ell_{01})~\text{and}~ 
\mathcal{Q}_{1} {=} (1, \lambda_{1}, W_{1}, \ell_{12}).
$
To simplify our analysis, we focus on the approximated system in Section~\ref{subsec_3node_path_unit_mem}, where the waiting time is a mixture of two independent {\em truncated exponential} random variables. 

Recall $\alpha_{i} = 1-e^{-\lambda_{1-i}W_{i}}, i = 0, 1$.
Let $S$ be the event that there is swapping in a cycle and $G_i, i = 0, 1$, the event that the first arrival is to queue $i$. 
The probability of the event $S$ is 
$
p_{\text{match}} =  (\lambda_{0}\alpha_{0}+\lambda_{1}\alpha_{1})/{\Lambda}, 
$
and the conditional probability $ p_{i | S}:=\Prob{G_i | S} = {\lambda_{i}\alpha_{i}}/{\Lambda  p_{\text{match}}}$. Thus,  
$$
p_{i|S}   
= {\lambda_{i}\alpha_{i}}/{(\lambda_{0}\alpha_{0}  + \lambda_{1}\alpha_{1})}, \quad i = 0, 1. 
$$
Conditional on the event $S \cap G_0$, we have $a_{1L} = \tau_{01}+\delta_1$ and $a_{1R} = \tau_{12}$, where the waiting time $\delta_1$ follows an exponential distribution with parameter $\lambda_{1}$ up to $W_0$. We denote this truncated distribution by $\mathrm{TruncExp}(\lambda_{1}, W_{0})$, whose PDF is 
\begin{align*}
g_{\mathtt{TE}}(x;\lambda_1, W_0) = 
 \frac{\lambda_1 e^{-\lambda_1 x}}{1-e^{-\lambda_1 W_0} } \mathbf{1}_{(0, W_0)}(x)
\end{align*}
In this case, $Z= \gamma_{01}\delta_{1L}$ with $\delta_{1L}\sim \mathrm{TruncExp}(\lambda_{1}, W_{0}).$
Similarly, in the event $S \cap G_1$, we have $Z= \gamma_{12}\delta_{1R}$ with $\delta_{1R}\sim \mathrm{TruncExp}(\lambda_{0}, W_1)$. 
Putting together, $Z$ follows a mixture of two truncated exponential distributions with PDF 
\[
g_Z(z) 
= \frac{p_{0|S}}{\gamma_{01}} g_{\mathtt{TE}}(z/\gamma_{01};\lambda_{1}, W_{0}) 
+ \frac{p_{1|S}}{\gamma_{12}} g_{\mathtt{TE}}(z/\gamma_{12};\lambda_{0}, W_{1}).
\]
This distribution of $Z$ allows us to compute that of the total age $A$ using the relationship in \eqref{eq_total_age_3node_path_RV}. For example, the expected age is given as follows with $h(\alpha) = \alpha + (1-\alpha)\ln (1-\alpha)$: 
\begin{align}
\mathbb{E}[{A}] &= {A}_0 + p_{0|S}\gamma_{01}\mathbb{E}[\delta_{1L}] + p_{1|S}\gamma_{12}\mathbb{E}[\delta_{1R}]\nonumber\\
&= {A}_0 + \frac{\gamma_{01}\lambda_{0}^2 h(\alpha_{0}) + \gamma_{12}\lambda_{1}^2 h(\alpha_{1})}{\lambda_{1}\lambda_{0}^2\alpha_{0} + \lambda_{0}\lambda_{1}^2\alpha_{1} }
    \label{eq_Abar_mean}
\end{align}


\vspace{0mm}
\subsection{E2E Fidelity Requirement and Maximum Holding Times}
As we can see, fidelity degradation is mainly caused by aging of all involved qubits, and the total age consists of fixed heralding latencies and random waiting time at the repeater. 
In the ideal case when there is no wait, i.e., $\delta_1=0$ and $A = {A}_0$, 
\begin{equation}\label{eq_Fmax}
F_{\max} = \big(1+v_{02}(1+2w_{01}^{\rm B}w_{12}^{\rm B} e^{-{A}_{0}}) \big)/4 
\end{equation}
is the maximum E2E fidelity without entanglement purification. It is clear that, in order to guarantee high E2E fidelity, we should use small MHTs to limit the total age. However, reducing MHTs, hence waiting time at the repeater, also decreases the probability of swapping $p_{\text{match}}$, thus reducing throughput. Therefore, the selection of MHTs $W_i, i = 0, 1$, presents a trade-off between fidelity and throughput.  

Consider a fixed fidelity requirement $F_{\rm req} < F_{\max}$. We are interested in delivering E2E entangled pairs with $f_{e2e} \geq F_{\rm req}$. Based on our analysis above, this constraint can be translated to an equivalent constraint on the total dephasing age $A$:
\begin{align}
    {A} \le \ln \frac{ 2 {v}_{02} w^{\rm B}_{01}w^{\rm B}_{12}}{4F_{\rm req}-1-{v}_{0n}} =: {A}_{\rm thr}
    \label{Age_threshold_e2e}
\end{align}
We can ensure that this constraint is satisfied by choosing appropriate MHTs $W_{i}, i = 0, 1$. Based on this observation, we formulate finding the optimal MHTs that maximize the E2E entanglement throughput as an optimization problem below.


\vspace{0mm}
\subsection{Optimal Entanglement Maximum Holding Times}
\label{sec_maximize_Rate_Fidelity}
First, we rewrite the rate at which swapping takes place in our queueing models in a unified form as follows:
\begin{equation*}
    R_m({\bf x}) = \frac{c_0 x_0 + c_1x_1}{1+x_0+x_1} \ , 
\end{equation*}
where ${\bf x} = [x_0, x_1]$ depends on the MHTs, and ${\bf c} = [c_0, c_1]$ depends only on fixed parameters. For unit-memory queues, 
    \begin{align*}
        {\bf x} &\!=\!\left[ 
        \begin{aligned}
        \frac{\alpha_{0}\lambda_{0}(T_p - \ell_{01} + \lambda_{1}^{-1})}{1+\lambda_{0}\ell_{01}+\lambda_{1}\ell_{12}},\  
        \frac{\alpha_{1}\lambda_{1}(T_p - \ell_{12} + \lambda_{0}^{-1})}{1+\lambda_{0}\ell_{01}+\lambda_{1}\ell_{12}}
        \end{aligned} 
        \right]\\
        {\bf c}&\!=\! \left[ {1}/{(T_p-\ell_{01} +\lambda_{1}^{-1})},\ {1}/{(T_p-\ell_{12} +\lambda_{0}^{-1})} \right], 
    \end{align*}
    follow from \eqref{eq_Rate_UnitCapacity}, 
    where $T_{p} = T_{\rm swp}+  \max(\ell_{01}, \ell_{12})$, and $\alpha_{i} = 1-e^{-\lambda_{1-i}W_{i}}$ depend on the MHTs. 
    For multiple-memory queues, let us use the closed-form in \eqref{eq_Rate_small_T} instead of the general case (which we leave for future work):
    \begin{align*}
        {\bf x} &\!\!=\! \left[ \sum_{n=1}^{K_{0}} \prod_{k=1}^n \frac{\lambda_{0|k-1}}{\lambda_{1|0}+k W_{0}^{-1}}, \sum_{m=1}^{K_{1}} \prod_{k=1}^m \frac{\lambda_{1|k-1}}{\lambda_{0|0}+k W_{1}^{-1}} \right]\\
        {\bf c} &\!\!=\! [\lambda_{1|0},\ \lambda_{0|0}]. 
    \end{align*}
Note the decoupling and monotonicity of $x_i$ and $c_i$ in both cases: $x_i$ depends only on $W_{i}$ (and not on $W_{1-i}$) and $c_i$ only on $\lambda_{1-i}$ (and not on $\lambda_i$), and they increase with $W_i$ and $\lambda_{1-i}$, respectively. 
Since throughput depends on the MHTs only via ${\bf x}$, we use ${\bf x}$ as our optimization variables to maximize the throughput. In particular, consider the following problem: 
$$\mathrm{maximize}~\{R_m({\bf x})\,|\, {\bf x} > {\bf 0}, {A}_{\max} \le {A}_{\rm thr}\},$$
where  ${A}_{\max}$ and ${A}_{\rm thr}$ are given in \eqref{eq_Abar_max} and \eqref{Age_threshold_e2e}, respectively. 
Note that the constraint ${A}_{\max} \le {A}_{\rm thr}$ is equivalent to the constraint $W_i\le \bar{W}_i$, $i = 0, 1$, where 
\begin{equation}
    \bar{W}_{0} = ({A}_{\rm thr}-{A}_0)/\gamma_{01}~\text{and}~\bar{W}_{1} = ({A}_{\rm thr}-{A}_0)/\gamma_{12} \ . 
    \label{eq_max_cutoff_bounds}
\end{equation} 
These constraints can be converted to some upper bounds $\bar{\bf x}$ on ${\bf x}$ because of the aforementioned monotonicity of $x_i$ in $W_i, i = 0, 1$. Thus, 
the problem above is equivalent to
\begin{equation} \label{eq_max_rate_strict_fidelity}
    \mathrm{maximize}~\{R_m({\bf x})\,|\, 0 \le {\bf x} \le \bar{\bf x}\}.
\end{equation}
\begin{theorem}
    The solution to this problem is given by $\mathbf{x}^* = \bar{\mathbf{x}}.$
\end{theorem}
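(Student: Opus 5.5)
The plan is to show directly that $R_m$ is strictly increasing in each coordinate $x_i$ throughout the box $0\le \mathbf{x}\le\bar{\mathbf{x}}$. The maximum is then attained at the upper corner $\bar{\mathbf{x}}$. Differentiating gives
\[
\frac{\partial R_m}{\partial x_0}=\frac{c_0+(c_0-c_1)x_1}{(1+x_0+x_1)^2},\qquad
\frac{\partial R_m}{\partial x_1}=\frac{c_1+(c_1-c_0)x_0}{(1+x_0+x_1)^2}.
\]
So it suffices to prove $c_i+(c_i-c_{1-i})x_{1-i}>0$ for every $x_{1-i}$ reachable by some MHT $W_{1-i}\in(0,\infty)$.

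This is not automatic for an abstract pair $(\mathbf{c},\mathbf{x})$: if $c_{1-i}>c_i$ and $x_{1-i}$ were large, the derivative could be negative. This step is the main obstacle. The key is that $\mathbf{x}$ and $\mathbf{c}$ are coupled through the same rates and latencies, which bounds $x_{1-i}$. When $c_i\ge c_{1-i}$ the condition is trivial. Otherwise it is equivalent to $x_{1-i}(c_{1-i}/c_i-1)<1$, and I would check this bound separately for each of the two models, using the explicit formulas.

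For multiple-memory queues, take $i=0$, so $c_0=\lambda_{1|0}$ and $c_1=\lambda_{0|0}$. The needed inequality is $(\lambda_{0|0}-\lambda_{1|0})E_1<\lambda_{1|0}$ whenever $\lambda_{0|0}>\lambda_{1|0}$. I will bound $E_1$ by dropping the $\theta_1=W_1^{-1}$ terms and extending the sum to infinity. For the bound I will assume the arrival rates are state-independent, $\lambda_{1|k-1}\le\lambda_{1|0}$. This gives $E_1<\sum_{m\ge1}r^m=r/(1-r)$ with $r=\lambda_{1|0}/\lambda_{0|0}<1$. Hence $(\lambda_{0|0}-\lambda_{1|0})E_1<\lambda_{1|0}$, and the inequality is strict because $K_1<\infty$. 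The case $i=1$ is symmetric.

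For unit-memory queues, a short computation gives
\[
x_1\Big(\frac{c_1}{c_0}-1\Big)=\frac{\alpha_1\big[\lambda_1(\ell_{12}-\ell_{01})+1-\lambda_1/\lambda_0\big]}{1+\lambda_0\ell_{01}+\lambda_1\ell_{12}}\le\frac{\alpha_1(1+\lambda_1\ell_{12})}{1+\lambda_0\ell_{01}+\lambda_1\ell_{12}}<1,
\]
using $\alpha_1<1$. The symmetric bound holds for $x_0$. Both partial derivatives are therefore strictly positive on the whole feasible box.

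Finally, for any feasible $\mathbf{x}$ I will move first $x_0$ up to $\bar x_0$ and then $x_1$ up to $\bar x_1$. Each move strictly increases $R_m$ by the positivity above. Hence $\mathbf{x}^*=\bar{\mathbf{x}}$ is the unique maximizer. By the monotonicity of $x_i$ in $W_i$, this corresponds to $W_i^*=\bar W_i$ as in \eqref{eq_max_cutoff_bounds}.
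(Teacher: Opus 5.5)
Your proof is correct and is essentially the paper's argument: the same partial derivatives $\partial R_m/\partial x_i$ and the same geometric-series bound $x_{1-i}<c_i/(c_{1-i}-c_i)$. As in the paper, that bound relies on $\lambda_{i|k-1}\le\lambda_{i|0}$ rather than on state-independence, so it does cover the multiplexing rates $(K_i-N_i)\lambda_i^*$. The differences are only in presentation: you check positivity of both partials on the whole box instead of first fixing the larger-$c$ coordinate and then checking the other on the face $x_0=\bar x_0$, and you write out the unit-memory computation that the paper leaves as straightforward.
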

\begin{proof}
As the main constraint is a simple bound, we will show that $R_m({\bf x})$ is component-wise monotonic and the optimal solution ${\bf x}^*$ is a corner point. In particular, partial derivatives 
\begin{align*}
    \frac{\partial R_m ({\bf x})}{\partial x_i} = \frac{c_i + (c_i-c_{1-i})x_{1-i}}{(1+x_0+x_1)^2}
\end{align*}
imply $\frac{\partial R_m ({\bf x})}{\partial x_0} > 0$ for ${\bf x} > {\bf 0}$ if $c_0 > c_1$. This means $x_0^* = \bar{x}_0$. 
On this manifold, $\frac{\partial R_m ({\bf x})}{\partial x_1} \ge 0$ iff $c_1 \ge (c_0-c_1)x_0^*$. We will show that this condition holds. First, for the multiple-memory case, starting with the Taylor series and truncating it, we obtain
\begin{align*}
\frac{c_1}{c_0-c_1} 
&= \sum_{n=1}^{\infty} \Big( \frac{c_1}{c_0} \Big)^n 
\ge  \sum_{n=1}^{K_{0}} \Big( \frac{c_1}{c_0} \Big)^n  
= \sum_{n=1}^{K_{0}} \Big( \frac{\lambda_{0|0}}{\lambda_{1|0}} \Big)^n \\
&\ge \sum_{n=1}^{K_{0}} \prod_{k=1}^n \frac{\lambda_{0|k-1}}{\lambda_{1|0} + k\bar{W}_{0}^{-1}} = x_0^*, 
\end{align*}
where the last inequality follows from $k\bar{W}_{0}^{-1} >0$  and $\lambda_{0|k-1} \le \lambda_{0|0}$ for all $k\ge 1$. 
Thus, we must have $x_1^* = \bar{x}_1$ when $c_0 > c_1$. A similar argument holds for the other case $c_0 < c_1$. As a result, we have ${\bf x}^* = \bar{\bf x}$. The single memory case can also be verified in a straightforward manner.
\end{proof}

This finding tells us that, in order to maximize the E2E entanglement throughput, for a given fidelity requirement, entangled pairs should be stored in the memories for up to $\bar{W}_i, i = 0, 1$ given in \eqref{eq_max_cutoff_bounds}. These bounds depend on the combined dephasing rates $\gamma_{01}$ and $\gamma_{12}$ of both links. Clearly, $\bar{W}_0 = \bar{W}_1$ when the memories at both end nodes have the same dephasing rate $\gamma_0$ = $\gamma_2$. In this case, we can use the same MHT for both queues. But, as we allow heterogeneous dephasing rates, MHTs are different in general. 

Finally, note that it is possible to include a constraint on the expected fidelity to the above optimization problem. In this case, the optimal solution needs not be on the boundary of the constraint set; we leave this extension for future work. 
\subsection{More General Memory Noise Model}
\label{sec_optimal_thruput_general_mem_noise}
We now return to the general memory noise model in Section~\ref{sec:Network_Model}. 
We denote the waiting times of entanglements in $\mathcal{Q}_0$ and $\mathcal{Q}_1$ by $\delta_{1L}$ and $\delta_{1R}$, respectively, with $\delta_{1L}\cdot\delta_{1R}=0$. The ages of all involved qubits are 
\begin{align*}
    a_{1L} &= \tau_{01} + \delta_{1L}, \quad a_{0R} = a_{1L}+T_{\rm swp}+\ell_{01} \\
    a_{1R} &= \tau_{12}+\delta_{1R}, \quad a_{2L} = a_{1R}+T_{\rm swp}+\ell_{12}.
\end{align*}
Recall from \eqref{e2e_fidelity} that 
$f_{e2e} \propto  v_{02}(1+2w_{02})$ with
\begin{align*}
    v_{02} &= v_0(a_{0R})v_2(a_{2L}) v_1(a_{1L})v_1(a_{1R}) v_{01}^{\rm B} v_{12}^{\rm B} \beta_1, \\
    w_{02} &= w_0(a_{0R})w_2(a_{2L}) w_1(a_{1L})w_1(a_{1R}) w_{01}^{\rm B} w_{12}^{\rm B}. \nonumber
\end{align*} 
Thus we can view $f_{e2e}$ as a function of $(\delta_{1L}, \delta_{1R})$. Since $v_i,w_i$ are monotonically decreasing in $(\delta_{1L}, \delta_{1R})$, so is $f_{e2e}$. This implies the following. 
First, the maximum fidelity is
$$ F_{\max} = f_{\rm e2e}(0,0). 
$$
Second, if $e_0$ arrives first and waits, then  $\delta_{1L} \le W_0$, $\delta_{1R}=0$, and $f_{\rm e2e}(\delta_{1L},0) \ge f_{\rm e2e}(W_0,0)$; otherwise  $f_{\rm e2e}(0,\delta_{1R}) \ge f_{\rm e2e}(0,W_1)$. Thus, for a  fixed fidelity requirement $F_{\rm req} < F_{\max}$, we can ensure $f_{\rm e2e} \ge F_{\rm req}$ by imposing upper bounds on the MHTs $W_i\le \bar{W}_i$, $i = 0, 1$ where 
$$ 
f_{\rm e2e}(0,\bar{W}_1) = f_{\rm e2e}(\bar{W}_0, 0) = F_{\rm req}.
$$
The uniqueness of $\bar{W}_i$ follows from monotonicity of $f_{\rm e2e}$. From here, the rate maximization can proceed as above. 

\section{Simulation and Evaluation}
\label{sec:evaluation}
We evaluate our models and algorithms using a discrete event simulator called SeQUeNCe \cite{wu2021sequence}. Specifically, we study the repeater capacity as a trade-off between throughput and fidelity across different memory allocations and operating modes, and demonstrate the performance gains achieved over different path lengths. The following are key parameters.



\textit{Channels and BSMs}: Quantum channels are optical fiber with  $L_0{=}32\,\text{km}$, $L_1{=}18\,\text{km}$, a peak frequency of $50$ MHz and attenuation rate of $0.2$ dB/km. Light speed in fiber is $c_0 = 2 {\cdot} 10^5\,\text{km/s}$. BSM dephasing parameters are modeled as $w_{01}^{\rm B} = e^{-L_0/2L_c}$ and $w_{12}^{\rm B} = e^{-L_1/2L_c}$ with a coherence length of $L_c=250\,\text{km}$. 
BSM depolarizing parameters due mainly to background photons are $v_{01}^{\rm B} = 0.03$ and $v_{12}^{\rm B} = 0.02$.   
We assume detectors at photonic BSMs with moderate parameters, including resolution of $100\,\text{ps}$, efficiency $\eta_{\rm d}=0.85$, and dark count rate of $50\,\text{Hz}$. Using a $100\,\text{ps}$ detection window, the effect of detector dark counts is negligible compared to our assumed background photon levels. 
Classical latencies are $\ell_{i,i+1}{=}\frac{L_i}{c_0}{+}\tau_{\rm proc}$ with processing time $\tau_{\rm proc}=10\,\mu\text{s}$. 

\textit{Memory:}  Each node has a limited number of quantum memories with an efficiency of $\eta_{\rm mem}=0.69$ and a frequency of $50$ kHz. We assume that decoherence is mainly Markovian dephasing with different rates: ${\gamma_0 = 5, \gamma_1=\gamma_2 = 10}$. The success probability of swapping is $q=0.5$. (Although this value is not critical for our simple path, it will be important in multi-repeater cases.) We use $T_{\rm swp} = 2\max(\ell_{01},\ell_{12})$. For entangled memories, we use MHTs derived in \eqref{eq_max_cutoff_bounds}.

Finally, for all cases below, we run at least $100$ independent simulations, each with a duration of $10$ seconds.


\subsection{Unit-Memory Queues}
To assess the entanglement generation rates of each link, we first simulated them independently, assuming unit memory. The mean rates are $\lambda_0^* = 76.3$ and $\lambda_1^* = 244.5$ entangled pairs per second (epps). These values are then used for our models to study the E2E entanglement throughput. 
Fig.~\ref{fig_unit_memory} plots the maximum achievable throughput and the average E2E fidelity as a function of the required fidelity with $F_{\max}=0.9033$, including the case with unit-efficiency detectors and swapping to demonstrate the performance limit. 

Both our analysis and simulation results confirm the swapping capacity trade-off: ensuring higher minimum fidelity $F_{\text{req}}$ demands smaller MHTs, resulting in lower throughput due to frequent resets and discards. Conversely, lower required fidelity permits larger MHTs, which significantly increases the throughput. Our renewal model accurately tracks simulation results for both throughput and average fidelity across various $F_{\text{req}}$, validating our assumptions and modeling approach in this case. CTMC and BDP models can also be used but less accurate; CTMC tends to underestimate the rate due to its  exponential‑distribution approximation, while BDP overestimates CTMC by assuming negligible reset and swap delays. The deviation between BDP and CTMC models is noticeable under unit-efficiency detectors due to such simplifying assumptions. Specifically, in this case, the interarrival times of the queues become more comparable with those delays.

\begin{figure}[t]
    \centering
    \includegraphics[width=1\linewidth]{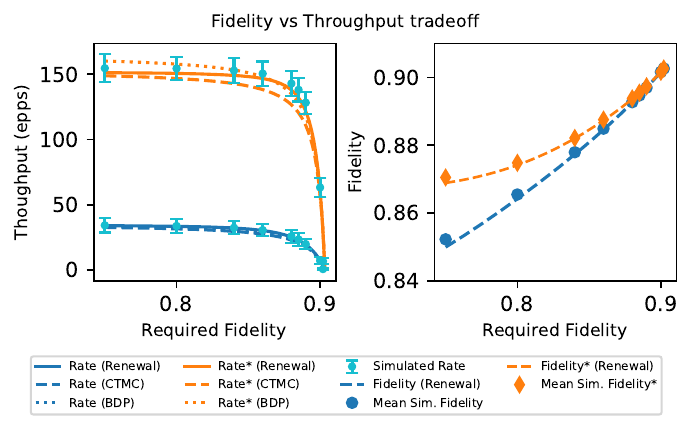}
    \caption{\small Comparison of models and simulation in unit-memory case. \texttt{Rate*} and \texttt{Fidelity*} correspond to near ideal setting with parameters: $\eta_{\rm mem}=0.9, \eta_{\rm d}=1$ and deterministic swapping $q=1$. Error bars denote means and standard deviations of simulation results.}
    \label{fig_unit_memory}
    \vspace{-3mm}
\end{figure}

\subsection{Multiple-Memory Queues}
Suppose the repeater has 6 memories in total, which can be assigned to the links in different ways. 
We assume \textit{memory multiplexing}, where multiple pairs can be used for parallel (or time-division multiplexed) entanglement generation. 
This yields state-dependent arrival rates $\lambda_{i|N_i} = (K_i-N_i)\lambda_i^*$ in our queue models. Fig.~\ref{fig_multi_memory} illustrates the swapping capacity via our models and simulation for various memory allocations.

\begin{figure*}[t]
    \centering
    \includegraphics[width=1\textwidth]{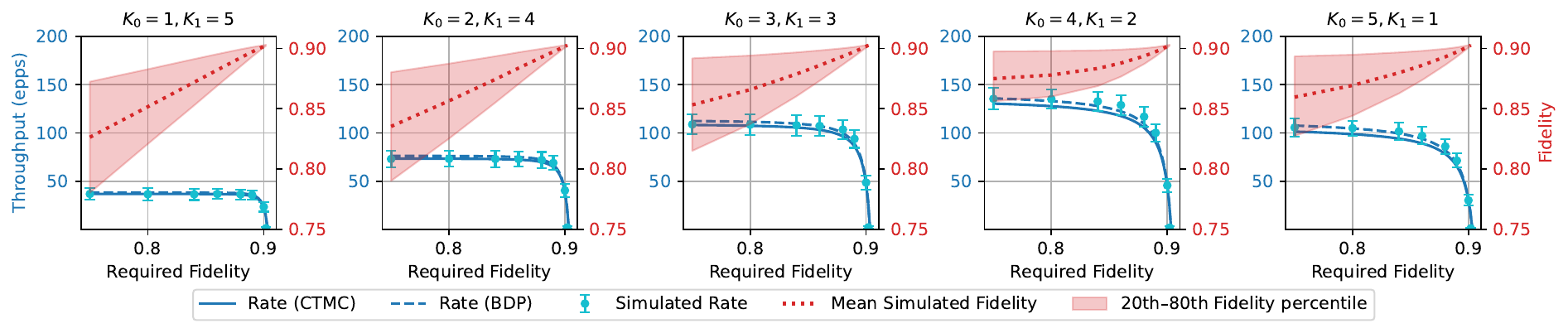}
    \caption{\small Throughput and fidelity when varying memory allocation and multiplexing.}
    \label{fig_multi_memory}
    \vspace{-3mm}
\end{figure*}

We note the following. First, in all settings, both CTMC and BDP models closely match the simulated throughput and show small deviations in some cases. 
Second, the impact of multiplexing is clearly demonstrated. By enabling parallelism, multiplexing substantially increases achievable throughput. Crucially, the capacity trade-off between throughput and fidelity extends to a higher range of required fidelity. For instance, when $K_0=1$ and $K_1=5$, the maximum throughput is relatively preserved up to a minimum fidelity of $0.89$. This is because $\mathcal{Q}_0$ is the bottleneck in this case due to its much lower generation rate and queue size compared to $\mathcal{Q}_1$.  
Shifting memories from $\mathcal{Q}_1$ to $\mathcal{Q}_0$ helps alleviate this bottleneck: not only does it increase the throughput as more pairs are matched, it also reduces the waiting time of entanglements at $\mathcal{Q}_1$, thereby improving the E2E fidelity. 
Importantly, our model accurately captures this shift, proving its capability to account for both resource allocation and parallel entanglement generation. This allows for easy identification of the optimal allocation, i.e., $K_0=4$ and $K_1 =2$.




The strong agreement between our models and simulations across various memory allocations and operating modes proves that our model captures the key system dynamics and can reliably predict performance under diverse configurations.

\subsection{Impact of Model-Based Selection of MHTs}
First, we demonstrate the optimality of our model-based MHTs $\bar{W}_i$ given by~\eqref{eq_max_cutoff_bounds}. To this end, we consider the case with $K_0=4, K_1=2$, and $F_{\rm req}=0.88$, and manually vary $W_i$ around $\bar{W}_i$. Fig.~\ref{fig_heatmap} shows the heatmap of the average simulated throughput, which clearly proves the optimality of $\bar{W}_i$. 

\begin{figure}
    \centering
    \includegraphics[width=0.9\linewidth]{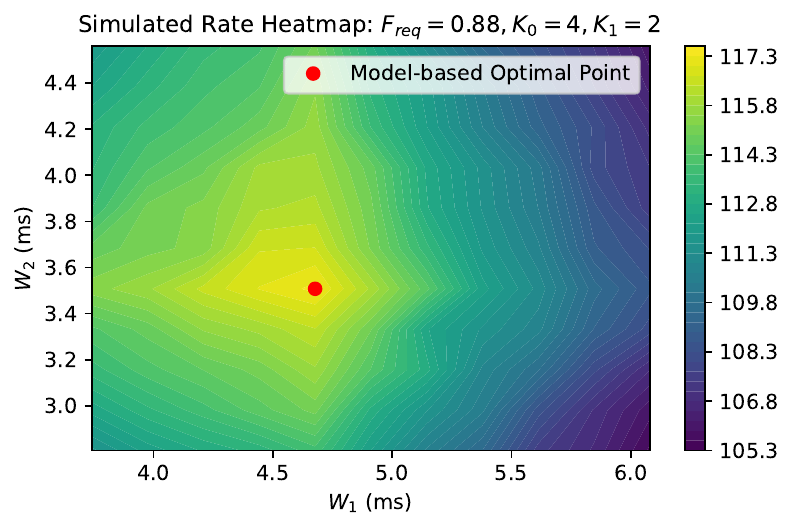}
    \vspace{-2mm}
    \caption{\small Average throughput when varying MHTs $W_i$.}
    \label{fig_heatmap}
    \vspace{-2mm}
\end{figure}

Next, we show that without our model-based MHTs, such as when MHTs are set to the maximum extent of the memory coherence time $1/\gamma_i$, the swapping capacity can reduce significantly. To this end, we consider the memory allocation $K_0=4, K_1=2$ and double the distances from $(L_0, L_1)=(32,18)$ to $(2L_0, 2L_1)=(64,36)$; in the latter case we keep the depolarizing parameters $v_{ij}^{\rm B}$ and the coherence length $L_c$ constant but allow dephasing parameters $w_{ij}^{\rm B}$ and $T_{\rm swp}$ to decay with the channel lengths, leading to $F_{\max}=0.8571$. The throughput comparison is given in Fig.~\ref{fig:eval_ttl_improvement}; dotted lines represent the rates without enforcing MHTs. Clearly, without MHTs, many generated entanglements fail the requirement, especially for higher $F_{\rm req}$ and longer path due to increased decoherence; e.g., the rate decreases by 1-2 orders of magnitude around $F_{\rm req}=0.88$ for the shorter path and $0.84$ for the longer one. Our model-based MHTs retain the usable entanglement rate, particularly at stringent fidelity requirements.

\begin{figure}[t]
    \centering
    \includegraphics[width=0.825\linewidth]{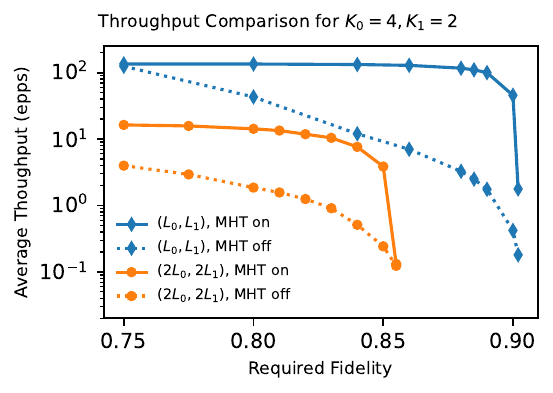}
    \vspace{-2mm}
    \caption{\small Comparing throughput with and without model-based MHTs.}
    \label{fig:eval_ttl_improvement}
    \vspace{-3mm}
\end{figure}

\section{Concluding Remarks}
In this paper, we investigated the fundamental swapping capacity trade-off between the achievable E2E entanglement throughput and the minimum fidelity of a quantum repeater with multi-memory links. We developed a comprehensive noise model incorporating both time-dependent memory decoherence and link-level errors. By leveraging renewal reward and queueing theories, we derived practical throughput models that can be used to find optimal cutoff times of quantum memories that maximize E2E throughput while guaranteeing a minimum required fidelity. Our extensive simulations validated the accuracy of our models, which closely tracked the throughput and average fidelity across various operating conditions. We showed that failing to enforce optimal memory cutoff times, especially in longer paths with increased decoherence, leads to a significant reduction in usable E2E entanglement rate. Furthermore, our proposed framework effectively captures the benefits of resource allocation, allowing for efficiently identifying optimal memory assignments to maximize network performance. This work provides a crucial theoretical and practical tool for the design and operation of quantum repeaters, enabling precise management of the rate-fidelity trade-off and realizing the potential of quantum memory networks for distributed quantum applications. 

Future work will focus on extending our analysis to more complex quantum network architectures. This includes modeling the E2E performance over multi-repeater paths, which introduces cascaded noise effects and memory management challenges. Furthermore, the framework will need to support arbitrary network topologies, necessitating the integration of dynamic routing and entanglement swapping protocols with advanced techniques like entanglement purification to maintain fidelity across the full quantum internet infrastructure.

\newpage
\bibliographystyle{IEEEtran}
\bibliography{ref}

\end{document}